\newtheorem{thm}{Theorem}
\newtheorem{prop}{Proposition}
\newtheorem{cor}{Corollary}
\newcommand{\bx} {\boldsymbol{x}}
\newcommand{\by} {\boldsymbol{y}}
\newcommand{\bH} {\boldsymbol{H}}
\newcommand{\bA} {\boldsymbol{A}}
\newcommand{\bD} {\boldsymbol{D}}
\newcommand{\bI} {\boldsymbol{I}}
\newcommand{\bR} {\boldsymbol{R}}
\newcommand{\bu} {\boldsymbol{u}}
\newcommand{\bh} {\boldsymbol{h}}
\newcommand{\bW} {\boldsymbol{W}}
\newcommand{\bM} {\boldsymbol{M}}
\newcommand{\bLam} {\boldsymbol{\Lambda}}
\newcommand{\gl}{\lambda}
\newcommand{\bxi} {\boldsymbol{\xi}}
\def\bal#1\eal{\begin{align}#1\end{align}}
\newcommand{\bp} {\begin{proof}}
\newcommand{\ep} {\end{proof}}
\newcommand{{\bRF}} {\right\}}
\DeclareMathOperator{\tr}{tr}
\DeclareMathOperator{\diag}{diag}
\begin{document}

\title{On The Capacity of Gaussian MIMO Channels Under The Joint Power Constraints}

\author{Sergey Loyka

\vspace*{-1\baselineskip}

\thanks{S. Loyka is with the School of Electrical Engineering and Computer Science, University of Ottawa, Ontario, Canada, e-mail: sergey.loyka@ieee.org}

}

\maketitle


\begin{abstract}
The capacity and optimal signaling over a fixed Gaussian MIMO channel are considered under the joint total and per-antenna power constraints (TPC and PAC). While the general case remains an open problem, a closed-form full-rank solution is obtained along with its sufficient and necessary conditions. The conditions for each constraint to be inactive are established. The high and low-SNR regimes are studied. Isotropic signaling is shown to be optimal in the former case while rank-1 signaling (beamforming) is not necessarily optimal in the latter case. Unusual properties of optimal covariance under the joint constraints are pointed out.
\end{abstract}

\vspace*{-1\baselineskip}
\section{Introduction}

Multi-antenna (MIMO) systems have been widely accepted by both academia and industry due to their high spectral efficiency \cite{Tse-05}. Massive MIMO is considered as a key technology for future 5G systems to meet ever-increasing traffic demand over a limited bandwidth available \cite{Shafi-17}. The capacity of a fixed Gaussian MIMO channel and its optimal signaling strategy are well-known under the total transmit power constaint (TPC): its is on the channel eigenmodes with power allocation given by the water-filling (WF) procedure \cite{Tsybakov-65}\cite{Telatar-95}. While the TPC is motivated by a limited power (energy) supply, individual per-antenna powers can also be limited when each antenna is equipped with its own amplifier (of limited power), in either collocated or distributed implementations, hence motivating per-antenna power constraint (PAC), as in \cite{Yu-07}-\cite{Tuninetti-14}. The capacity of a fixed Gaussian MISO channel under the PAC has been established in \cite{Vu-11}, which is significantly different from the standard WF solution and is equivalent to the equal-gain transmission (EGT) with phases adjusted to compensate for the channel phase shifts. This problems remains open in the general MIMO case while a numerical algorithm was proposed in \cite{Vu-11b} and a closed-form full-rank solution was obtained in \cite{Tuninetti-14}. Single-user PAC-constrained results were extended to multiple access channel in \cite{Zhu-12} via a numerical optimization algorithm. The capacity of ergodic fading MISO channel under long-term average PAC and full CSI at both ends was established in \cite{Maamari-14}.

One may also consider the joint constraints, i.e. the TPC and the PAC simultaneously. This is motivated by the scenario with limited overall power budget and where each antenna is equipped with its own power amplifier. The capacity of fixed Gaussian MISO channel under the joint TPC and PAC has been established in \cite{Loyka-16}, where it was shown that the optimal signaling is a combination of EGT and maximum ratio transmission (MRT), with phase shifts adjusted to compensate channel-induced phase shifts. Following the remark in \cite{Tuninetti-14}, the MISO result can be also adapted to any rank-1 MIMO channel. This result was further extended to fading MIMO channels in \cite{Loyka-17}, where it was shown that isotropic signaling is optimal if the fading distribution is right-unitary-invariant. A sub-optimal signaling strategy for ergodic-fading MIMO channel was developed in \cite{Khoshnevisan-12} under the long-term TPC and short-term PAC. The general MIMO case under the joint power constraints remains an open problem. The key difficulty is the fact that, unlike the TPC only case, the feasible set of transmit covariance matrices is not isotropic anymore (due to the PAC) and hence the tools developed under the TPC (which exploit this symmetry) cannot be used anymore. New tools are needed.

The present paper partially closes this gap by obtaining a closed-form full-rank solution for the optimal signaling in a fixed Gaussian MIMO channel under the joint constraints, thus extending earlier results in  \cite{Tuninetti-14} and \cite{Loyka-16}. Sufficient and necessary conditions for optimal signaling to be of full rank are also established. Optimal signaling under the joint constraints is shown to have properties significantly different from those under the TPC only. Namely, (i) optimal covariance is not necessarily unique (multiple solutions are possible), (ii) it can be of full-rank even when the channel is rank-deficient, (iii) signaling on the channel eigenmodes is not optimal anymore (unless all PACs are inactive). It is the inter-play between the TPC and PAC that induces these unusual properties. The conditions when either TPC and PAC are inactive are given. The high and low-SNR regimes are studied. Isotropic signaling is shown to be optimal under the joint constraints in the former case while rank-1 signaling (beamforming) is not necessarily optimal in the latter case (in contrast to the standard WF signaling).

\textit{Notations}: bold lower-case letters denote column vectors while bold capital denote matrices; $\bR^+$ is Hermitian conjugation of $\bR$; $r_{ii}$ denotes $i$-th diagonal entry of $\bR$; $(\bR)_{ij}$ is $ij$-th entry of $\bR$, $\gl_i(\bR)$ is $i$-th eigenvalue of $\bR$, unless indicated otherwise, eigenvalues are in decreasing order: $\gl_1\ge \gl_2 \ge...$; $\bR \ge 0$ means that $\bR$ is positive semi-definite; $|\bR|$ is the determinant of $\bR$, $\bI$ is identity matrix of appropriate size, $(x)_+=\max(0,x)$.

\section{Channel Model and Capacity}
\label{sec.Channel Model}

Discrete-time model of a fixed Gaussian MISO channel can be put into the following form:
\bal
\label{eq.ch.model}
\by = \bH\bx +\bxi
\eal
where $\by, \bx, \bxi$ and $\bH$ are the received and transmitted signals, noise and channel respectively; $m$ is the number of transmit antennas. The noise is assumed to be complex Gaussian circularly-symmetric with zero mean and unit variance, so that power is also the SNR. Complex-valued channel model is assumed throughout the paper, with full channel state information available both at the transmitter and the receiver. The channel $\bH$ is fixed. Gaussian signaling is known to be optimal in this setting \cite{Tse-05}-\cite{Telatar-95} so that finding the channel capacity $C$ amounts to finding an optimal transmit covariance matrix $\bR$:
\bal
\label{eq.C.def}
C = \max_{\bR \in S_R} \ln|\bI+ \bW\bR|
\eal
where $\bW=\bH^+\bH$, $S_R$ is the constraint set. In the case of the TP constraint only, it takes the form
\bal
S_R=\{\bR: \bR\ge 0,\ \tr\bR \le P_T\},
\eal
where $P_T$ is the maximum total Tx power, and the optimal covariance is well-known: the optimal signaling is on the channel eigenmodes with optimal power allocation via the water-filling, which can be compactly expressed as
\bal
\bR^*_{WF}=(\gl^{-1}\bI-\bW^{-1})_+
\eal
where $(\bA)_+$ retains positive eigenmodes of Hermitian matrix $\bA$,
\bal
(\bA)_+ = \sum_{i: \gl_i(\bA)>0} \gl_i(\bA) \bu_i\bu_i^+
\eal
where $\bu_i$ is $i$-th eigenvector of $\bA$; $\gl>0$ is determined from the TPC $\tr\bR^*_{WF} = P_T$.

Under the PA constraints,
\bal
S_R=\{\bR: \bR\ge 0, r_{ii} \le P\},
\eal
where $r_{ii}$ is $i$-th diagonal entry of $\bR$ (the Tx power of $i$-th antenna), $P$ is the PA power constraint. No closed-form solution is known for the optimal covariance in the general case under this constraint, while such solutions are available in the MISO case \cite{Vu-11} and in the MIMO case when the optimal covariance is of full-rank \cite{Tuninetti-14}.

The joint power constraints, i.e. TPC and PAC, are motivated by practical designs where each antenna has its own amplifier (and hence PAC) while limited total power/energy supply motivates TPC. The optimal signaling and capacity have been found under the joint constraints for the MISO channel in \cite{Loyka-16}, while the general MIMO case remains an open problem.

The next section provides a closed-form full-rank solution for the MIMO case as well as sufficient and necessary conditions for this solution to hold and some related properties.

\section{The MIMO Capacity Under the Joint Constraints}
\label{sec.C.PA.TP}

Following the standard arguments, see e.g. \cite{Tse-05}-\cite{Telatar-95}, Gaussian signaling is still optimal under the joint constraints and the channel capacity $C$ is as in \eqref{eq.C.def}, where the constraint set $S_R$ is as follows:
\bal
\label{eq.SR.PA.TP}
S_R=\{\bR: \bR\ge 0, tr\bR \le P_T, r_{ii}\le P\}
\eal
and $P_T, P$ are the total and per-antenna constraint powers. Unfortunately, no closed-form solution is known for the optimal covariance in \eqref{eq.C.def} under the constraints in \eqref{eq.SR.PA.TP} in the general case. The following Theorem partially closes this gap and gives a closed-form full-rank solution for optimal signaling in this setting.

\begin{thm}
\label{thm.C}
Let the channel in \eqref{eq.ch.model} be of full column rank, $\bW=\bH^+\bH >0$, and let the per-antenna and total transmit constraint powers be sufficiently high,
\bal
\label{eq.thm.1}
P > \gl_m^{-1}(\bW),\ P_T > m\gl_m^{-1}(\bW) - \tr\bW^{-1}
\eal
Then, the optimal Tx covariance $\bR^*$ in \eqref{eq.C.def} under the TPC and PAC in \eqref{eq.SR.PA.TP} is of full-rank and is given by
\bal
\label{eq.R*.1}
\bR^* &= \min(P\bI, \gl^{-1}\bI - \bD(\bW^{-1})) - \bar{\bD}(\bW^{-1})\\
\label{eq.R*.2}
    &= \gl^{-1}\bI - \bW^{-1} - ((\gl^{-1}-P)\bI - \bD(\bW^{-1}))_+
\eal
where $\bD(\bW)$ retains only diagonal entries of $\bW$ (with all off-diagonal entries set to zero), and $\bar{\bD}(\bW)=\bW-\bD(\bW)$ retains off-diagonal entries only (with all diagonal entries set to zero), the operator $\min$ applies entry-wise, $\gl \ge 0$ is the Lagrange multiplier responsible for the total power constraint; $\gl=0$ if $mP \le P_T$; otherwise, it is determined as a unique solution of the following equation
\bal
\label{eq.R*.3}
\sum_i \min(P, \gl^{-1} - (\bW^{-1})_{ii}) = P_T
\eal
The capacity can be expressed as
\bal
C = \ln|\bW| + \sum_i \ln \min(\gl^{-1}, P+(\bW^{-1})_{ii})
\eal
\end{thm}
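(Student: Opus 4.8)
The plan is to regard the problem \eqref{eq.C.def} with constraint set \eqref{eq.SR.PA.TP} as a convex program and solve it through the Karush--Kuhn--Tucker (KKT) conditions, which are necessary and sufficient here: $\ln|\bI+\bW\bR|$ is concave in $\bR$, $S_R$ is convex and compact, and Slater's condition holds ($\bR=\epsilon\bI$ is strictly feasible for small $\epsilon>0$). Using $\ln|\bI+\bW\bR|=\ln|\bW|+\ln|\bR+\bW^{-1}|$, I would introduce a scalar multiplier $\gl\ge 0$ for the TPC, a diagonal positive semi-definite multiplier $\bM=\diag(\mu_1,\dots,\mu_m)\ge 0$ for the PAC, and a positive semi-definite multiplier $\bZ\ge 0$ for $\bR\ge 0$. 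Stationarity then reads $(\bR+\bW^{-1})^{-1}=\gl\bI+\bM-\bZ$, accompanied by the complementary slackness relations $\bZ\bR=\bo$, $\mu_i(r_{ii}-P)=0$, and $\gl(\tr\bR-P_T)=0$.

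Next I would posit that the optimum is full-rank, $\bR^*>0$; then $\bZ\bR^*=\bo$ forces $\bZ=\bo$, so $\bR^*+\bW^{-1}=(\gl\bI+\bM)^{-1}$ is diagonal. Hence the off-diagonal part of $\bR^*$ is $-\bar{\bD}(\bW^{-1})$, while $r_{ii}=(\gl+\mu_i)^{-1}-(\bW^{-1})_{ii}$. The PAC slackness then pins down each $\mu_i$: if $\mu_i=0$ then $r_{ii}=\gl^{-1}-(\bW^{-1})_{ii}\le P$, and if $\mu_i>0$ then $r_{ii}=P$ and $\mu_i=(P+(\bW^{-1})_{ii})^{-1}-\gl\ge 0$; in both cases $r_{ii}=\min(P,\gl^{-1}-(\bW^{-1})_{ii})$ and $(\gl+\mu_i)^{-1}=\min(\gl^{-1},P+(\bW^{-1})_{ii})$, which is \eqref{eq.R*.1}, and \eqref{eq.R*.2} follows from $\min(a,b)=a-(a-b)_+$ applied entrywise. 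The TPC slackness fixes $\gl$: if $mP\le P_T$ the choice $\gl=0$ (all PAC active, $r_{ii}=P$) is feasible and optimal; otherwise $\gl>0$, $\tr\bR^*=P_T$, which is \eqref{eq.R*.3}, and its left side is continuous and non-increasing in $\gl$, tends to $mP>P_T$ as $\gl\to 0^+$ and to $-\tr\bW^{-1}<P_T$ as $\gl\to\infty$, and is strictly decreasing at any point where it equals $P_T$ (there at least one summand lies below its cap $P$) -- hence a unique root.

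The \emph{crux} is to confirm that this candidate $\bR^*$ is indeed positive definite under \eqref{eq.thm.1}, which legitimizes the full-rank ansatz; the remaining KKT conditions ($\gl,\mu_i\ge 0$, primal feasibility, slackness) then hold by construction. Writing $\bR^*=\diag(t_i)-\bW^{-1}$ with $t_i=\min(\gl^{-1},P+(\bW^{-1})_{ii})$, I would split on whether $\gl^{-1}>P$ or $\gl^{-1}\le P$. If $\gl^{-1}>P$ (which covers the case $\gl=0$), then every $t_i>P$ since $(\bW^{-1})_{ii}>0$, so $\bR^*>P\bI-\bW^{-1}>0$ by the first inequality in \eqref{eq.thm.1}. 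If $\gl^{-1}\le P$, then every $t_i=\gl^{-1}$, so $\bR^*=\gl^{-1}\bI-\bW^{-1}$, which is positive definite iff $\gl^{-1}>\gl_1(\bW^{-1})=\gl_m^{-1}(\bW)$; and indeed summing $r_{ii}\le\gl^{-1}-(\bW^{-1})_{ii}$ over $i$ with $\tr\bR^*=P_T$ gives $\gl^{-1}\ge(P_T+\tr\bW^{-1})/m>\gl_m^{-1}(\bW)$ by the second inequality in \eqref{eq.thm.1}. Either way $\bR^*>0$, so $\bR^*$ is capacity-achieving, and the capacity formula follows at once from $C=\ln|\bW|+\ln|\bR^*+\bW^{-1}|=\ln|\bW|+\sum_i\ln t_i$. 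I expect the positive-definiteness check to be the only genuine obstacle; the rest is bookkeeping, with the mild subtlety that the mixed regime (some PAC active, some inactive) must be handled -- which the case split on the sign of $\gl^{-1}-P$ does cleanly.
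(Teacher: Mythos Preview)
Your proposal is correct and follows essentially the same route as the paper: set up the KKT system, posit a full-rank optimum so that the positive-semidefiniteness multiplier vanishes, read off $\bR^*+\bW^{-1}=(\gl\bI+\bM)^{-1}$ as diagonal, solve the diagonal entries via the PAC complementary slackness, and then verify a posteriori that the resulting $\bR^*$ is positive definite under \eqref{eq.thm.1}. The derivation of \eqref{eq.R*.1}--\eqref{eq.R*.3} and the capacity expression is identical in substance.

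The one place where the arguments diverge is the positive-definiteness check. The paper bounds everything uniformly through $\gl_w^{-1}=\gl_m^{-1}(\bW)$: it shows $\gl^{-1}>\gl_w^{-1}$ (using the second inequality in \eqref{eq.thm.1} via the monotonicity of the trace function $f(\gl)$) and $P>\gl_w^{-1}$ (the first inequality), then in a single chain concludes $\bR^*>\gl_w^{-1}\bI-\bW^{-1}\ge 0$. You instead split on the sign of $\gl^{-1}-P$: when $\gl^{-1}>P$ every $t_i>P$ and only the first condition of \eqref{eq.thm.1} is needed; when $\gl^{-1}\le P$ all PAC are inactive, $\bR^*$ is the pure WF solution, and only the second condition is needed. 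Your split makes it transparent \emph{which} hypothesis does the work in each regime, whereas the paper's argument is more compact but blends the two conditions. Both are complete; your uniqueness argument for $\gl$ (strict monotonicity at the root because at least one summand is uncapped when the sum equals $P_T<mP$) is in fact more detailed than what the paper provides.
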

\begin{proof}
See Appendix.
\end{proof}

It follows from the proof that inactive TPC ($\gl=0$) implies that all PACs are active ($\gl_i>0$, $r_{ii}=P$ for all $i$). Hence, a single inactive PAC ($\gl_i=0$, $r_{ii}< P$ for some $i$) implies that the TPC is active ($\gl>0,\ \tr\bR^*=P_T$).

The expression in \eqref{eq.R*.2} has the following interpretation: its first part $\gl^{-1}\bI - \bW^{-1}$ is the standard full-rank WF solution under the TPC only, and its 2nd part $(\gl^{-1}\bI - \bD(\bW^{-1})- P\bI)_+$ is a correction term accounting for the PAC.

It follows from \eqref{eq.R*.1} that per-antenna powers are as follows:
\bal
r_{ii} = \min(P, \gl^{-1} - (\bW^{-1})_{ii}) >0
\eal
which also has an insightful interpretation: these powers are the minimum of those under the PAC and TPC individually (1st and 2nd term in the min operator, respectively).

Next, we show that the solution in Theorem \ref{thm.C} reduces to known solutions in some special cases.

\begin{cor}
In Theorem \ref{thm.C}, if the per-antenna constraint power $P$ is sufficiently high,
\bal
\label{eq.PA.inact}
P \ge \gl^{-1} - (\bW^{-1})_{ii}\ \forall i
\eal
then all PACs are inactive and \eqref{eq.R*.2} reduces to the standard WF solution,
\bal
\bR^* = \gl^{-1}\bI - \bW^{-1}
\eal
where $\gl^{-1} = m^{-1}(P_T+ \tr\bW^{-1})$.
\end{cor}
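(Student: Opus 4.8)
The plan is to specialize the closed-form expression \eqref{eq.R*.2} of Theorem \ref{thm.C} to the regime in which \eqref{eq.PA.inact} holds for every $i$. First I would observe that \eqref{eq.PA.inact} is precisely the statement that the diagonal matrix $(\gl^{-1}-P)\bI - \bD(\bW^{-1})$ has all of its diagonal entries non-positive; being diagonal, its positive part is then the zero matrix, $\big((\gl^{-1}-P)\bI - \bD(\bW^{-1})\big)_+ = \bo$. Substituting this into \eqref{eq.R*.2} collapses the correction term and leaves $\bR^* = \gl^{-1}\bI - \bW^{-1}$, i.e. the standard water-filling form. Equivalently, one may argue from \eqref{eq.R*.1}: under \eqref{eq.PA.inact} each per-antenna power satisfies $r_{ii}=\min(P,\gl^{-1}-(\bW^{-1})_{ii})=\gl^{-1}-(\bW^{-1})_{ii}\le P$, so every PAC is inactive, which is the first assertion.

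Next I would pin down $\gl$. Since \eqref{eq.PA.inact} can hold only with a finite $\gl^{-1}$, we have $\gl>0$, hence the TPC is active (this also follows from the remark after Theorem \ref{thm.C}: a single inactive PAC forces $\tr\bR^*=P_T$), so $\gl$ is the root of \eqref{eq.R*.3}. Evaluating that sum with the simplified terms gives $\sum_i\big(\gl^{-1}-(\bW^{-1})_{ii}\big)=m\gl^{-1}-\tr\bW^{-1}=P_T$, whence $\gl^{-1}=m^{-1}(P_T+\tr\bW^{-1})$, as claimed; full rank of $\bR^*$ is inherited from Theorem \ref{thm.C}.

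I do not expect a genuine obstacle, since the corollary is a direct specialization of Theorem \ref{thm.C}. The only point requiring a little care is the self-referential nature of \eqref{eq.PA.inact} (it constrains $P$ through $\gl^{-1}$, which is itself determined by $P_T$ and $\bW$): one should read it as the statement that the Lagrange multiplier furnished by Theorem \ref{thm.C} happens to satisfy \eqref{eq.PA.inact}, and then verify, as above, that this multiplier is consistent with the pure-TPC value $m^{-1}(P_T+\tr\bW^{-1})$. Alternatively, and perhaps more transparently, one could bypass Theorem \ref{thm.C} altogether: the water-filling covariance is optimal under the TPC alone, and \eqref{eq.PA.inact} is exactly the condition that it also meets the PAC, hence it remains optimal once the PAC is adjoined to the feasible set.
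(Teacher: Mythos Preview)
Your proposal is correct and mirrors exactly what the paper intends: the corollary is stated there without proof, as an immediate specialization of Theorem~\ref{thm.C}, and your argument---collapsing the correction term in \eqref{eq.R*.2} once \eqref{eq.PA.inact} forces $(\gl^{-1}-P)\bI-\bD(\bW^{-1})\le 0$, then reading off $\gl^{-1}$ from \eqref{eq.R*.3}---is precisely that specialization made explicit. Your remark on the self-referential nature of \eqref{eq.PA.inact} and the alternative direct argument (WF is TPC-optimal and, under \eqref{eq.PA.inact}, also PAC-feasible) are both sound and go slightly beyond what the paper spells out.
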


\begin{cor}
In Theorem \ref{thm.C}, if the TPC power $P_T$ is sufficiently high, $P_T \ge m P$, the TPC is inactive and \eqref{eq.R*.1} reduces to the PAC-only full-rank solution in \cite{Tuninetti-14},
\bal
\bR^* = P\bI - \bar{\bD}(\bW^{-1})
\eal
\end{cor}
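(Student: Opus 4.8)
The plan is to specialize Theorem \ref{thm.C} to the regime $P_T\ge mP$. By the statement of that theorem, $mP\le P_T$ forces the total-power Lagrange multiplier to vanish, $\gl=0$, which is exactly the assertion that the TPC is inactive. What remains is to insert $\gl=0$ into \eqref{eq.R*.1} and simplify.

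With $\gl=0$, i.e. $\gl^{-1}=\infty$, each diagonal entry $\gl^{-1}-(\bW^{-1})_{ii}$ of $\gl^{-1}\bI-\bD(\bW^{-1})$ exceeds $P$, so the entry-wise minimum gives $\min(P\bI,\,\gl^{-1}\bI-\bD(\bW^{-1}))=P\bI$ and \eqref{eq.R*.1} collapses to $\bR^*=P\bI-\bar{\bD}(\bW^{-1})$, which is the full-rank PAC-only solution of \cite{Tuninetti-14}. To make this rigorous without a limiting argument, I would instead argue by set inclusion: the matrix $\bR^*_{PAC}:=P\bI-\bar{\bD}(\bW^{-1})$ is the PAC-only optimum (valid and of full rank precisely because $P>\gl_m^{-1}(\bW)$ in \eqref{eq.thm.1}), and since $\bar{\bD}(\bW^{-1})$ has zero diagonal one has $\tr\bR^*_{PAC}=mP\le P_T$, hence $\bR^*_{PAC}$ lies in the joint feasible set $S_R$ of \eqref{eq.SR.PA.TP}. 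As $S_R$ is contained in the PAC-only feasible set, the joint capacity cannot exceed the PAC-only capacity; therefore the feasible point $\bR^*_{PAC}$ is in fact optimal for the joint problem, the TPC is inactive, and $\bR^*=P\bI-\bar{\bD}(\bW^{-1})$, in agreement with Theorem \ref{thm.C}.

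I do not expect a genuine obstacle: the corollary is essentially a restriction of Theorem \ref{thm.C} to a sub-regime. The only delicate point is the meaning of $\gl=0$ (equivalently $\gl^{-1}=\infty$) inside \eqref{eq.R*.1}, and the set-inclusion remark disposes of that cleanly; checking $\tr\bR^*_{PAC}=mP$ and feasibility is routine.
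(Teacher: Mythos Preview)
Your proposal is correct and matches the paper's intended (though not explicitly written) proof: the corollary is a direct specialization of Theorem~\ref{thm.C}, using the clause ``$\gl=0$ if $mP\le P_T$'' to collapse the $\min$ in \eqref{eq.R*.1} to $P\bI$. Your supplementary set-inclusion argument is a clean way to sidestep the $\gl^{-1}=\infty$ convention, but the paper simply relies on the theorem statement as is.
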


\begin{cor}
In Theorem \ref{thm.C}, $i$-th PAC is active if and only if
\bal
\label{eq.PA.act}
(\bW^{-1})_{ii}< \gl^{-1} -P
\eal
\end{cor}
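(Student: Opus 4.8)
The plan is to extract the activity condition directly from the KKT conditions established in the proof of Theorem \ref{thm.C}. Since $\bR^*$ is of full rank, the dual matrix $\bM \ge 0$ associated with the constraint $\bR \ge 0$ vanishes by complementary slackness, and the stationarity condition reduces to $(\bI+\bW\bR^*)^{-1}\bW = \gl\bI + \bLam$, where $\bLam = \diag(\gl_1,\dots,\gl_m)$ collects the nonnegative Lagrange multipliers of the individual PACs. Inverting and using $\bW>0$ gives $\bR^* = (\gl\bI+\bLam)^{-1} - \bW^{-1}$, so that the per-antenna powers are $r_{ii} = (\gl+\gl_i)^{-1} - (\bW^{-1})_{ii}$.

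With this in hand, I would say the $i$-th PAC is active when $\gl_i>0$ and invoke the complementary slackness relation $\gl_i(r_{ii}-P)=0$. If $\gl_i>0$, then $r_{ii}=P$, hence $(\gl+\gl_i)^{-1} = P + (\bW^{-1})_{ii}$; since $\gl_i>0$ forces $(\gl+\gl_i)^{-1} < \gl^{-1}$ (with $\gl^{-1}:=+\infty$ when $\gl=0$), this yields $(\bW^{-1})_{ii} < \gl^{-1} - P$, i.e., \eqref{eq.PA.act}. Conversely, if $\gl_i=0$ then $r_{ii} = \gl^{-1} - (\bW^{-1})_{ii}$, which by feasibility is $\le P$ (this is exactly the branch of the $\min$ in \eqref{eq.R*.1}), hence $(\bW^{-1})_{ii} \ge \gl^{-1} - P$. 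Taking contrapositives gives the claimed equivalence. Equivalently, one may argue straight from \eqref{eq.R*.2}: the $i$-th diagonal entry of the correction term $((\gl^{-1}-P)\bI - \bD(\bW^{-1}))_+$ equals $(\gl^{-1}-P-(\bW^{-1})_{ii})_+$, which is strictly positive exactly when $(\bW^{-1})_{ii}<\gl^{-1}-P$; since this term is precisely what removes power in excess of the PAC, its being nonzero at index $i$ is the very meaning of the $i$-th PAC being active.

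The only delicate point I anticipate is the boundary case $(\bW^{-1})_{ii} = \gl^{-1} - P$: there $r_{ii}=P$, so the constraint holds with equality, yet the KKT system forces $\gl_i=0$ (if $\gl_i>0$ we would need $(\gl+\gl_i)^{-1}=\gl^{-1}$, which is impossible), so the $i$-th PAC is counted as inactive, consistent with the strict inequality in \eqref{eq.PA.act}. One should also check the degenerate case $\gl=0$ (TPC inactive): reading $\gl^{-1}$ as $+\infty$ makes \eqref{eq.PA.act} hold for every $i$, in agreement with the remark after Theorem \ref{thm.C} that an inactive TPC forces all PACs to be active.
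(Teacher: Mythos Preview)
Your argument is correct and is exactly the route the paper intends: the corollary is an immediate consequence of the expression $\gl_i = ((P+(\bW^{-1})_{ii})^{-1}-\gl)_+$ derived in the Appendix, so that $\gl_i>0$ iff $(P+(\bW^{-1})_{ii})^{-1}>\gl$, i.e., $(\bW^{-1})_{ii}<\gl^{-1}-P$. Your handling of the boundary case and of $\gl=0$ is more explicit than the paper's (which states the corollary without proof), but the underlying idea is identical.
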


The Lagrange multiplier $\gl$ in Theorem \ref{thm.C} is determined from \eqref{eq.R*.3} when the TPC is active (otherwise, $\gl=0$). Since its left-hand side is a monotonically-decreasing function of $\gl$, bisection algorithm is an efficient (exponentially-fast) tool \cite{Boyd-04} to solve it. However, it requires lower and upper bounds of the solution for initialization. These are given below.

\begin{prop}
The Lagrange multiplier $\gl$ in Theorem \ref{thm.C} is bounded as follows:
\bal
0 \le \gl \le \gl_m(\bW)
\eal
\end{prop}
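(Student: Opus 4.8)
\emph{Proof proposal.} The lower bound $\gl \ge 0$ is immediate: $\gl$ is the multiplier of the inequality constraint $\tr\bR \le P_T$ and is declared nonnegative in Theorem~\ref{thm.C}. For the upper bound, observe first that if the TPC is inactive then $\gl=0 < \gl_m(\bW)$ (since $\bW>0$), so we may assume the TPC is active and $\gl>0$; the goal then becomes $\gl^{-1} \ge \gl_m^{-1}(\bW) = \gl_1(\bW^{-1})$.

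The cleanest route is to read this off the structure of $\bR^*$ in \eqref{eq.R*.2}. By definition of the operator $(\cdot)_+$, the correction term $((\gl^{-1}-P)\bI - \bD(\bW^{-1}))_+$ is positive semi-definite, hence $\gl^{-1}\bI - \bW^{-1} - \bR^* \ge 0$; that is, the water-filling matrix $\gl^{-1}\bI - \bW^{-1}$ dominates $\bR^*$ in the positive semi-definite order. Since $\bR^* \ge 0$ (it is in fact full-rank under the hypotheses of Theorem~\ref{thm.C}), writing $\gl^{-1}\bI - \bW^{-1} = (\gl^{-1}\bI - \bW^{-1} - \bR^*) + \bR^*$ as a sum of two positive semi-definite matrices shows $\gl^{-1}\bI - \bW^{-1} \ge 0$, i.e. $\gl^{-1} \ge \gl_1(\bW^{-1}) = \gl_m^{-1}(\bW)$, hence $\gl \le \gl_m(\bW)$. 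Intuitively, the PAC correction can only shrink $\bR^*$ relative to pure water-filling, so it cannot turn a rank-deficient WF matrix into a valid covariance — which is precisely why the full-rank hypothesis forces $\gl^{-1}$ above $\gl_m^{-1}(\bW)$.

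I would also record the alternative scalar argument, since it yields the sharper bound useful for initializing the bisection on \eqref{eq.R*.3}. Each summand obeys $\min(P, \gl^{-1}-(\bW^{-1})_{ii}) \le \gl^{-1}-(\bW^{-1})_{ii}$, so \eqref{eq.R*.3} gives $P_T \le m\gl^{-1} - \tr\bW^{-1}$; combining with the hypothesis $P_T > m\gl_m^{-1}(\bW) - \tr\bW^{-1}$ from \eqref{eq.thm.1} yields $\gl^{-1} > \gl_m^{-1}(\bW)$, i.e. $\gl < \gl_m(\bW)$.

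The only point requiring care is to avoid the tempting but inconclusive estimate ``$r_{ii} = \min(P,\gl^{-1}-(\bW^{-1})_{ii})>0$ for every $i$, hence $\gl^{-1} > (\bW^{-1})_{ii}$ for every $i$'': this bounds $\gl^{-1}$ only by $\max_i (\bW^{-1})_{ii}$, which falls short of $\gl_1(\bW^{-1})$ whenever $\bW^{-1}$ has nonzero off-diagonal entries. The remedy is to use the global semi-definiteness of $\bR^*$ (first route) or to sum over all antennas and invoke the power hypothesis \eqref{eq.thm.1} (scalar route); a per-coordinate bound alone does not suffice.
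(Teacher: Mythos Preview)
Your proof is correct and follows the same approach as the paper: the lower bound is dual feasibility, and the upper bound comes from $0 \le \bR^* \le \gl^{-1}\bI - \bW^{-1}$ (your first route), which the paper records as the necessary condition \eqref{eq.NC.1}. Your scalar alternative also appears verbatim in the Appendix proof of Theorem~\ref{thm.C}, where it is used to establish $\gl < \gl_m(\bW)$ from the hypothesis \eqref{eq.thm.1}.
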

\begin{proof}
The lower bound is from dual feasibility. The upper bound is from \eqref{eq.NC.1}.
\end{proof}

It should be pointed out that while \eqref{eq.thm.1} are sufficient for the optimal signaling to be of full-rank, they are not necessary, i.e. there are cases where the optimal signaling is of full-rank even when \eqref{eq.thm.1} do not hold. The following proposition gives necessary conditions for an optimal covariance to be of full-rank.

\begin{prop}
Let $\bW>0$. The necessary conditions for optimal covariance $\bR^*$ to be of full rank are as follows:
\bal
\label{eq.NC.1}
P > \gl_1(\bar{\bD}(\bW^{-1})),\ \gl < \gl_m(\bW)
\eal
where 1st condition is also sufficient if $mP\le P_T$ (inactive TPC), and $\gl$ is determined from the TPC as in Theorem \ref{thm.C}.
\end{prop}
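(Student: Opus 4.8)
The plan is to read the two conditions off the KKT optimality system, set up exactly as in the proof of Theorem \ref{thm.C}. Since the objective in \eqref{eq.C.def} is concave in $\bR$ and the set \eqref{eq.SR.PA.TP} is convex with non-empty interior, the KKT conditions are necessary and sufficient for optimality. Attach a multiplier $\gl \ge 0$ to the TPC, multipliers $\mu_i \ge 0$ (collected in $\boldsymbol{\mu}=(\mu_1,\dots,\mu_m)$) to the PACs $r_{ii}\le P$, and a Hermitian matrix $\bM \ge 0$ to $\bR \ge 0$. Stationarity (using that the gradient of $\ln|\bI+\bW\bR|$ is $(\bW^{-1}+\bR)^{-1}$, which needs $\bW>0$) reads $(\bW^{-1}+\bR^*)^{-1} = \gl\bI + \diag(\boldsymbol{\mu}) - \bM$. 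If $\bR^*$ is of full rank, complementary slackness forces $\bM=\bo$, hence $\bR^* = (\gl\bI + \diag(\boldsymbol{\mu}))^{-1} - \bW^{-1}$; the first term is diagonal, so the off-diagonal part of $\bR^*$ is $-\bar{\bD}(\bW^{-1})$ and $r_{ii} = (\gl+\mu_i)^{-1} - (\bW^{-1})_{ii}$, consistent with \eqref{eq.R*.1}.

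Both necessary conditions then follow from elementary Loewner-order arguments, using $\bR^*>0$ and the fact that a Hermitian matrix dominating a positive definite one is itself positive definite. First, $\mu_i\ge 0$ gives $(\gl\bI+\diag(\boldsymbol{\mu}))^{-1}\le \gl^{-1}\bI$, hence $\bR^* \le \gl^{-1}\bI - \bW^{-1}$, so $\gl^{-1}\bI - \bW^{-1}>0$, i.e. $\gl^{-1} > \gl_1(\bW^{-1}) = \gl_m^{-1}(\bW)$, which is the second condition in \eqref{eq.NC.1} (for $\gl=0$ it holds trivially since $\bW>0$). Second, the PAC gives $\bD(\bR^*)\le P\bI$, and since $\left(P\bI - \bar{\bD}(\bW^{-1})\right) - \bR^* = P\bI - \bD(\bR^*) \ge \bo$, we get $\bR^* \le P\bI - \bar{\bD}(\bW^{-1})$, hence $P\bI - \bar{\bD}(\bW^{-1})>0$, i.e. $P > \gl_1(\bar{\bD}(\bW^{-1}))$, the first condition.

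For the sufficiency claim when $mP \le P_T$: here $\tr\bR \le \sum_i r_{ii}\le mP \le P_T$ for every feasible $\bR$, so the TPC is redundant and $\gl=0$. I would then verify that $\bR^* = P\bI - \bar{\bD}(\bW^{-1})$ satisfies the KKT system: it is Hermitian with $r_{ii}=P$ (all PACs active), $\bW^{-1}+\bR^* = P\bI + \bD(\bW^{-1})$ is diagonal and positive definite, so stationarity holds with $\mu_i = \left(P+(\bW^{-1})_{ii}\right)^{-1} > 0$ and $\bM=\bo$; by convexity this $\bR^*$ is optimal, and it is of full rank iff $P\bI > \bar{\bD}(\bW^{-1})$, i.e. $P > \gl_1(\bar{\bD}(\bW^{-1}))$. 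The only real care needed in the whole argument is in handling the matrix-valued stationarity condition and the Loewner-order manipulations; the rest mirrors the Appendix proof of Theorem \ref{thm.C} and is routine.
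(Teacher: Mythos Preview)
Your argument is correct and follows essentially the same route as the paper: both bound the full-rank optimal covariance above in the Loewner order by $P\bI-\bar{\bD}(\bW^{-1})$ and by $\gl^{-1}\bI-\bW^{-1}$ (the paper reads these directly off \eqref{eq.R*.1}, you rederive them from the KKT system), and then use $\bR^*>0$ to force each upper bound to be positive definite. You additionally spell out the sufficiency argument for the case $mP\le P_T$ by verifying the KKT conditions for $\bR^*=P\bI-\bar{\bD}(\bW^{-1})$, which the paper's proof of this proposition leaves implicit (it is Corollary~2).
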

\begin{proof}
Using \eqref{eq.R*.1},
\bal\notag
\bR^* &= \min(P\bI, \gl^{-1}\bI - \bD(\bW^{-1})) - \bar{\bD}(\bW^{-1})\\
    &\le P\bI - \bar{\bD}(\bW^{-1})
\eal
so that $\bR^*>0$ implies $P\bI > \bar{\bD}(\bW^{-1})$ and hence 1st condition in \eqref{eq.NC.1}.  2nd condition is obtained from
\bal
0< \bR^* \le \gl^{-1}\bI - \bW^{-1}
\eal
\end{proof}

Based on this, the following procedure can be used to establish whether optimal covariance is of full-rank in general:

1. If $P_T \ge mP$, then 1st condition in \eqref{eq.NC.1} is both sufficient and necessary for $\bR^*>0$ and \eqref{eq.R*.1} applies.

2. If $P_T < mP$, define $\bR^*(\gl)$ for a given $\gl>0$ from \eqref{eq.R*.1} and find $\gl$ from \eqref{eq.R*.3}. If $\bR(\gl)^*>0$, then it is a solution; otherwise, optimal covariance is rank-deficient.

This procedure gives an exhaustive characterization of all cases when $\bR^*$ is of full rank for a full-rank channel (since KKT conditions are necessary for optimality).

In the following, we characterize the conditions when some constraints are inactive for a full-rank channel.

\begin{prop}
Let $\bW>0$. If the TPC is inactive, then all PACs are active. Hence, (i) when at least one PAC is inactive, the TPC is active; (ii) the TPC is inactive if and only if
\bal
mP\le P_T
\eal
\end{prop}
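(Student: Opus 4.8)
The hypothesis here is only $\bW>0$, not the stronger condition \eqref{eq.thm.1}, so the closed form of Theorem \ref{thm.C} is unavailable and I would argue directly from the KKT conditions. The problem \eqref{eq.C.def} with $S_R$ as in \eqref{eq.SR.PA.TP} is a convex program with a strictly feasible point (a small multiple of $\bI$), so the KKT conditions are necessary and sufficient for optimality, carrying multipliers $\gl\ge0$ for the TPC, $\gl_i\ge0$ for the PACs, and $\bM\ge0$ for $\bR\ge0$ with $\bM\bR^*=0$. The heart of the proposition is the single implication: \emph{$\gl=0$ (TPC inactive) forces $\gl_i>0$, and hence $r^*_{ii}=P$, for every $i$} --- this is exactly the fact already recorded in the remark following Theorem \ref{thm.C}. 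Parts (i) and (ii) are then short consequences.

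For the core implication I would use the stationarity condition, which (as derived in the proof of Theorem \ref{thm.C}) reads
\[
(\bW^{-1}+\bR^*)^{-1}=\gl\bI+\bLam-\bM,\qquad \bLam=\diag(\gl_1,\dots,\gl_m).
\]
Setting $\gl=0$ and reading off the $i$-th diagonal entry gives $[(\bW^{-1}+\bR^*)^{-1}]_{ii}=\gl_i-(\bM)_{ii}$. Since $\bW>0$ and $\bR^*\ge0$ we have $\bW^{-1}+\bR^*>0$, so its inverse is positive definite and its $i$-th diagonal entry is strictly positive; because $(\bM)_{ii}\ge0$ this forces $\gl_i>0$, and then complementary slackness $\gl_i(r^*_{ii}-P)=0$ gives $r^*_{ii}=P$, for every $i$. (If one prefers to avoid the explicit stationarity relation, the weaker statement ``$r^*_{ii}=P$ for all $i$'' --- all that parts (i)--(ii) actually use --- follows by perturbation: were $r^*_{jj}<P$ for some $j$, then, $\gl=0$ making the TPC redundant, $\bR^*+\epsilon\,\boldsymbol{e}_j\boldsymbol{e}_j^+$ with $\boldsymbol{e}_j$ the $j$-th unit vector would remain feasible for small $\epsilon>0$ and would strictly increase $\ln|\bI+\bW\bR|$ since $\bW>0$, contradicting optimality.)

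Granting the core implication, the three assertions follow at once. The first is the implication itself. For part (i), take the contrapositive: if some PAC is inactive --- whether read as $\gl_j=0$ or as $r^*_{jj}<P$ for some $j$ --- then not all $(\gl_i,r^*_{ii})$ can equal their ``active'' values, so $\gl\ne0$, and since $\gl\ge0$ this means $\gl>0$: the TPC is active. For part (ii): if $\gl=0$ then $r^*_{ii}=P$ for all $i$, hence $mP=\tr\bR^*\le P_T$ by feasibility; conversely, if $mP\le P_T$ then every $\bR$ with $r_{ii}\le P$ automatically satisfies $\tr\bR=\sum_i r_{ii}\le mP\le P_T$, so the TPC is a redundant constraint on $S_R$ and may be assigned the multiplier $\gl=0$, i.e. the TPC is inactive. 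Hence the TPC is inactive if and only if $mP\le P_T$.

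I do not anticipate a genuine obstacle: the one substantive ingredient is the core implication above, and the remainder is elementary logic together with the trivial identity $\max\{\tr\bR:\bR\ge0,\ r_{ii}\le P\}=mP$. The only points needing a little care are the strict positivity of the diagonal entries of $(\bW^{-1}+\bR^*)^{-1}$ --- immediate once $\bW>0$ gives $\bW^{-1}+\bR^*>0$ --- and keeping the convention ``TPC inactive $\equiv\gl=0$'' consistent in the boundary case $mP=P_T$, where the TPC holds with equality yet still admits a zero multiplier.
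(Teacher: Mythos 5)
Your proof is correct and follows essentially the same route as the paper, whose entire argument is the one-line remark that the claim ``follows from the stationarity condition in \eqref{eq.KKT.1}'': reading off the diagonal of $(\bW^{-1}+\bR^*)^{-1}=\gl\bI+\bLam-\bM$ at $\gl=0$ to force $\gl_i>0$ is exactly the intended step, and your handling of (i), (ii) and of the boundary case $mP=P_T$ fills in details the paper leaves implicit. The alternative perturbation argument you sketch is a nice self-contained substitute but is not needed beyond that.
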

\begin{proof}
Follows from the stationarity condition in \eqref{eq.KKT.1}.
\end{proof}

It should be noted that this Proposition does not hold if the channel is rank-deficient, as the example below demonstrates.

\section{Orthogonal Channel}

To get additional insights, let us consider the case where the channel is orthogonal,
\bal
\bW=\bH^+\bH = \bD_w = \diag\{w_i\}
\eal
where $w_i=|\bh_i|^2$, and the columns of $\bH$ are orthogonal to each other: $\bh_i^+\bh_j=0,\ i\neq j$. One of the motivations of this is massive MIMO (a key technology for 5G \cite{Shafi-17}) where this orthogonality holds approximately for a large number of Rx antenna, with improving accuracy as the number of antennas increases \cite{Marzetta-16}.

In this case, $\bR^*$ is also diagonal and, from Theorem \ref{thm.C},
\bal
\bR^* = \min(P\bI,\gl^{-1}\bI-\bD_w^{-1})= \min(\bR^*_{PAC}, \bR^*_{TPC})
\eal
where $\bR^*_{PAC}=P\bI, \bR^*_{TPC}= \gl^{-1}\bI-\bD_w^{-1}$ are the optimal covariance matrices under the PAC and TPC only ($\gl$ is determined as in Theorem \ref{thm.C}).

Note that, for the orthogonal channel, isotropic signaling is optimal under the PAC but not under the TPC. Hence, we conclude that, in the massive MIMO setting, isotropic signaling is optimal under the PAC and suboptimal under the TPC or joint constraints (unless the TPC is inactive).

\section{High-SNR Regime}

It is well-known that isotropic signaling is optimal at high SNR for the standard WF solution (under the TPC only) in a full-rank channel,
\bal
\label{eq.WF.high.SNR}
\bR^*_{WF} \approx \frac{P_T}{m}\bI
\eal
when $P_T \gg m\gl_m^{-1}(\bW)$.
In this section, we establish the optimality of isotropic signaling under the joint constraints. As a first step, the following proposition shows that isotropic signaling is optimal at high SNR under the PAC.

\begin{prop}
\label{prop.PAC.iso}
Consider a full-rank channel ($\bW>0$). Isotropic signaling is optimal in this channel under the PAC in the high-SNR regime, i.e. when $P \gg \gl_m^{-1}(\bW)$,
\bal
\label{eq.PAC.high.SNR}
\bR^*_{PAC} \approx P\bI,\ C_{PAC}\approx \ln|\bW|+ m\ln P
\eal
\end{prop}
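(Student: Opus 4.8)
The plan is to use the closed-form PAC-only full-rank solution of \cite{Tuninetti-14} --- which is exactly the corollary above obtained by letting the TPC be inactive in Theorem~\ref{thm.C} --- and then expand it in powers of $1/P$. First I would note that the informal high-SNR hypothesis $P\gg\gl_m^{-1}(\bW)$ in particular yields $P>\gl_m^{-1}(\bW)$, so (with the TPC effectively absent, $P_T\to\infty$) the conditions \eqref{eq.thm.1} of Theorem~\ref{thm.C} are met and $\bR^*_{PAC} = P\bI - \bar{\bD}(\bW^{-1})$; the same conclusion also follows from the first condition in \eqref{eq.NC.1} together with the elementary spectral bound $\gl_1(\bar{\bD}(\bW^{-1})) = \gl_1\!\big(\bW^{-1}-\bD(\bW^{-1})\big) \le \gl_1(\bW^{-1}) - \min_i(\bW^{-1})_{ii} < \gl_m^{-1}(\bW)$ (Weyl's inequality, using that a positive definite matrix has positive diagonal entries).

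Given this, the claim $\bR^*_{PAC}\approx P\bI$ is immediate, since the correction $\bar{\bD}(\bW^{-1})$ does not scale with $P$: $P^{-1}\bR^*_{PAC} = \bI - P^{-1}\bar{\bD}(\bW^{-1})\to\bI$ as $P\to\infty$, and as $P\bI$ is itself PAC-feasible this is precisely asymptotic optimality of isotropic signaling. For the capacity I would plug the full-rank solution into \eqref{eq.C.def} and simplify via $\bW\bW^{-1}=\bI$, which gives $\bW\bar{\bD}(\bW^{-1}) = \bW\bW^{-1}-\bW\bD(\bW^{-1}) = \bI-\bW\bD(\bW^{-1})$ and hence $\bI+\bW\bR^*_{PAC} = \bW\big(P\bI+\bD(\bW^{-1})\big)$; taking $\ln|\cdot|$ then yields
\bal\notag
C_{PAC} = \ln|\bW| + \sum_i \ln\big(P + (\bW^{-1})_{ii}\big) = \ln|\bW| + m\ln P + \sum_i \ln\big(1 + (\bW^{-1})_{ii}/P\big),
\eal
and the residual sum is $O(1/P)\to 0$ (to leading order $\tr\bW^{-1}/P$), giving $C_{PAC}\approx\ln|\bW|+m\ln P$. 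As a cross-check, the same asymptotics follow without the full-rank formula by squeezing $C_{PAC}$ between the isotropic rate $\ln|\bI+P\bW|$ and the TPC-only capacity at $P_T=mP$, both equal to $\ln|\bW|+m\ln P+O(1/P)$.

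I expect the only point requiring care to be the first step --- verifying that the hypothesis $P\gg\gl_m^{-1}(\bW)$ genuinely lands us in the regime where the PAC-only full-rank formula holds --- which is settled by the spectral bound above; everything afterwards is direct matrix algebra built on $\bW\bW^{-1}=\bI$ and $\ln(1+x)=O(x)$.
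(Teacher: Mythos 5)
Your proposal is correct, and your primary argument takes a genuinely different route from the paper's. The paper proves the proposition purely by a sandwich argument: $C(P\bI)\le C_{PAC}\le C_{TPC}(mP)$, the lower bound from feasibility of $P\bI$ under the PAC and the upper bound from the inclusion of the PAC-feasible set in the TPC-feasible set with $P_T=mP$, then invokes the high-SNR form of the water-filling solution to collapse the two bounds --- exactly the argument you relegate to a ``cross-check.'' Your main argument instead computes everything exactly from the closed-form PAC-only solution $\bR^*_{PAC}=P\bI-\bar{\bD}(\bW^{-1})$ and expands in $1/P$. Your verification that the hypothesis $P\gg\gl_m^{-1}(\bW)$ genuinely lands in the full-rank regime is sound: Weyl's inequality gives $\gl_1(\bar{\bD}(\bW^{-1}))\le\gl_1(\bW^{-1})-\min_i(\bW^{-1})_{ii}<\gl_m^{-1}(\bW)<P$, which is the sufficient condition in \eqref{eq.NC.1} for the case of an inactive TPC; and the identity $\bI+\bW\bR^*_{PAC}=\bW\big(P\bI+\bD(\bW^{-1})\big)$ is correct, yielding the exact expression $C_{PAC}=\ln|\bW|+\sum_i\ln\big(P+(\bW^{-1})_{ii}\big)$ with residual $O(\tr\bW^{-1}/P)$. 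What your route buys is an exact capacity formula and an explicit rate of convergence to the isotropic asymptote; what the paper's route buys is economy --- it needs only feasibility, set inclusion and the known high-SNR behavior under the TPC, without invoking Theorem~\ref{thm.C} or the result of \cite{Tuninetti-14} at all. Both are valid, and since you also supply the sandwich argument, your write-up subsumes the paper's proof.
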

\begin{proof}
First, observe that
\bal
C_{PAC} \ge C(P\bI)
\eal
where $C(\bR) = \ln|\bI+\bW\bR|$, since $\bR=P\bI$ is feasible under the PAC. Next,
\bal
C_{PAC} \le C_{TPC}(mP)
\eal
where $C_{TPC}(mP)$ is the capacity under the TPC with the total power $P_T=mP$, since any feasible $\bR$ under the PAC, $r_{ii}\le P$, is also feasible under the TPC with $\tr\bR \le mP$. Using \eqref{eq.WF.high.SNR}, one obtains at high SNR $C_{TPC}(mP)\approx C(P\bI)$ and hence $C_{PAC}\approx C(P\bI)$ and \eqref{eq.PAC.high.SNR} follow.
\end{proof}

We are now in a position to establish the optimality of isotropic signaling under the joint (TPC + PAC) constraints at high SNR.

\begin{prop}
\label{prop.TPC+PAC.iso}
Consider a full-rank channel ($\bW>0$). Let $P^*=\min(P,P_T/m)$. Isotropic signaling is optimal in this channel under the joint constraints (TPC + PAC) in the high-SNR regime, i.e. when $P^* \gg \gl_m^{-1}(\bW)$,
\bal
\label{eq.TPC+PAC.high.SNR}
\bR^* \approx P^*\bI,\ C\approx \ln|\bW|+ m\ln P^*
\eal
\end{prop}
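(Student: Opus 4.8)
The plan is to sandwich the joint-constrained capacity $C$ between the rate achieved by isotropic signaling at power $P^*$ and the two single-constraint upper bounds, all of which agree to leading order in the high-SNR regime; this mirrors the sandwich used in the proof of Proposition~\ref{prop.PAC.iso}. Throughout write $C(\bR)=\ln|\bI+\bW\bR|$. For the \emph{lower bound}, note that since $P^*=\min(P,P_T/m)$ the covariance $\bR=P^*\bI$ has per-antenna powers $P^*\le P$ and total power $mP^*\le P_T$, so it is feasible under \eqref{eq.SR.PA.TP}; hence $C\ge C(P^*\bI)=\ln|\bI+P^*\bW|$. Factoring $|\bI+P^*\bW|=(P^*)^m|\bW|\cdot|\bI+(P^*\bW)^{-1}|$ and using $\sum_i\ln\bigl(1+1/(P^*\gl_i(\bW))\bigr)\le \tr\bW^{-1}/P^*\to0$ gives $C(P^*\bI)=\ln|\bW|+m\ln P^*+o(1)$ as $P^*\to\infty$.

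For the \emph{upper bound}, dropping the PAC enlarges the feasible set, so $C\le C_{TPC}(P_T)$, and dropping the TPC enlarges it, so $C\le C_{PAC}$ (the PAC-only capacity with per-antenna power $P$); hence $C\le\min\{C_{TPC}(P_T),\,C_{PAC}\}$. Crucially, $P_T/m\ge P^*$ and $P\ge P^*$, so the hypothesis $P^*\gg\gl_m^{-1}(\bW)$ forces \emph{both} surrogate problems into their own high-SNR regimes: \eqref{eq.WF.high.SNR} yields $C_{TPC}(P_T)=\ln|\bW|+m\ln(P_T/m)+o(1)$, and Proposition~\ref{prop.PAC.iso} yields $C_{PAC}=\ln|\bW|+m\ln P+o(1)$. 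Therefore $C\le\ln|\bW|+m\ln\min\{P_T/m,P\}+o(1)=\ln|\bW|+m\ln P^*+o(1)$.

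Combining the two bounds gives $C=\ln|\bW|+m\ln P^*+o(1)$, which is the rate formula in \eqref{eq.TPC+PAC.high.SNR}; and since the feasible point $\bR=P^*\bI$ attains this value to leading order, isotropic signaling is (asymptotically) optimal, i.e. $\bR^*\approx P^*\bI$. If one prefers to read the covariance claim directly off Theorem~\ref{thm.C} rather than through near-optimality of $P^*\bI$, one checks that \eqref{eq.thm.1} holds whenever $P^*\gg\gl_m^{-1}(\bW)$ (indeed $P\ge P^*>\gl_m^{-1}(\bW)$ and $P_T\ge mP^*>m\gl_m^{-1}(\bW)-\tr\bW^{-1}$), so \eqref{eq.R*.1} applies, and observes that $\bD(\bW^{-1})$ and $\bar{\bD}(\bW^{-1})$ are $O(\gl_m^{-1}(\bW))$ and hence negligible next to $P\bI$ and $\gl^{-1}\bI$, with $\gl=0$ when $mP\le P_T$ and $\gl^{-1}\to P^*$ otherwise, leaving $\bR^*\approx\min(P,P^*)\bI=P^*\bI$. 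The proof is routine given the earlier results; the one point to watch is precisely the observation that drives it — that $P^*\le P$ and $P^*\le P_T/m$ make the single-constraint high-SNR conditions automatic consequences of the joint one — together with ensuring the $o(1)$ remainders are controlled uniformly enough for the sandwich to close.
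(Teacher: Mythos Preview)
Your proof is correct and follows essentially the same sandwich argument as the paper: feasibility of $P^*\bI$ gives the lower bound, while dropping one constraint at a time gives $C\le\min(C_{TPC},C_{PAC})$, and both bounds collapse to $C(P^*\bI)$ at high SNR. Your write-up is in fact more careful than the paper's (you make the $o(1)$ remainders explicit and spell out why $P^*\le\min(P,P_T/m)$ pushes both single-constraint problems into their high-SNR regimes), and your closing remark reading $\bR^*\approx P^*\bI$ directly off Theorem~\ref{thm.C} is a nice bonus not present in the original.
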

\begin{proof}
First, observe that
\bal
C \ge C(P^*\bI)
\eal
since $\bR=P^*\bI$ is feasible under the joint constraints: $\tr\bR \le P_T$ and $r_{ii}\le P$. Next,
\bal
C \le \min(C_{TPC}, C_{PAC})
\eal
and, at high SNR, $C_{TPC}\approx C(P_T\bI/m)$, $C_{PAC}\approx C(P\bI)$, and hence $C\approx C(P^*\bI)$, as desired. The inequality $P^* \gg \gl_m^{-1}(\bW)$ comes from the approximation $\ln(1+x)\approx \ln x$, which holds if $x\gg 1$.
\end{proof}

It is remarkable that, for any of the constraints considered here, isotropic signaling is optimal at high SNR. This simplifies the system design significantly as no feedback and no elaborate precoding are necessary for this signaling strategy. This also complements the respective result in \cite{Loyka-17} obtained for the right-unitary-invariant fading channel.

\section{Low-SNR Regime}

In this section, we consider the behaviour of optimal covariance in the low-SNR regime, namely, when
\bal
\label{eq.low.SNR}
\min(mP,P_T) \ll \gl_1^{-1}(\bW)
\eal

It is well-known that, for the standard WF solution (under the TPC only), the optimal signaling is beamforming (rank-1) at low SNR, $\bR^*_{WF} \approx P_T\bu_1\bu_1^+$, where $\bu_1$ is the eigenvector of $\bW$ corresponding to its largest eigenvalue. As the following example shows, this does not necessarily hold under the joint constraints.

\textit{Example}: Let $P_T=1.5\cdot 10^{-2}$, $P=10^{-2}$, and $\bW = \diag\{2,1\}$. It is straightforward to see that the optimal covariance is $\bR^*=10^{-2}\cdot \diag\{1,0.5\}$ in this case, i.e. full-rank and beamforming is not optimal, does not matter how low the SNR is. If, however, the per-antenna constraint power is increased to $P=1.5\cdot 10^{-2}$, all PACs become inactive and beamforming is optimal: $\bR^*=10^{-2}\cdot \diag\{1.5,0\}$.

Hence, we conclude that it is the interplay between the TPC and the PAC that makes a significant difference at low SNR while having negligible impact at high SNR: while the optimal signaling under the TPC, the PAC and the joint constraints are all isotropic at high SNR, they are quite different at low SNR.

\section{Properties of Optimal Covariance Under the Joint Constraints}

As it was seen in the previous sections, optimal signaling under the joint constraints can be significantly different from that under the TPC only. In the following, we point out additional significant differences.

1. The TPC can be inactive (unless $C=0$ - a trivial case not considered here).

2. Optimal covariance is not necessarily unique.

3. Optimal covariance can be of full-rank even when the channel is not.

4. Optimal signaling is not on the eigenmodes of $\bW$, unless it is diagonal or all PACs are inactive, and the capacity depends on its eigenvectors.

These unusual properties should be contrasted with those under the TPC only, where (i) the TPC is always active, (ii) the optimal covariance is always unique, (iii) optimal covariance is rank-deficient in a rank-deficient channel, and (iv) optimal signaling is on the channel eigenmodes and the capacity is independent of channel eigenvectors.

While Property 4 follows from Theorem \ref{thm.C}, the following example illustrates Properties 1-3.

\textit{Example}: Let $\bW = \diag\{1,0\},\ P_T=2,\ P=1$. It is straightforward to see that
\bal
\bR^*=\diag\{1,a\},\ 0 \le a \le 1
\eal
so that (i) $\bR^*$ is not unique, (ii) it is of full-rank when $a>0$, even though the channel is not, and (iii) the TPC is inactive if $a<1$. Note however that if the channel is enhanced to a full-rank one, $\bW = \diag\{1, b\},\ b>0$, then $\bR^*=\bI$ and all unusual properties disappear.

\section{An Example}

To further illustrate the MIMO channel capacity under the joint constraints, let us consider the following  channel:
\bal
\label{eq.ex}
\bW = \left(
        \begin{array}{cc}
          1 & 0.1 \\
          0.1 & 0.2 \\
        \end{array}
      \right)
\eal
with $P=3$. Fig. 1 shows the capacities under the TPC, the PAC and the joint constraints, in addition to that under isotropic signaling. It is clear that isotropic signaling is optimal at high SNR ($P_T\ge 6$), as expected from Propositions \ref{prop.PAC.iso} and \ref{prop.TPC+PAC.iso}. However, the low-SNR behaviour is different. Note that the TPC is inactive when $P_T\ge 6$ and the PAC is inactive when $P_T\le 3$ while both constraints are active in-between. Isotropic signaling is clearly sub-optimal when $P_T < 6$. When $P_T > 6$, the capacity saturates since the PAC dominates the performance, so that increasing total power supply beyond  $P_T = 6$ makes no difference.

\begin{figure}[t]
\centerline{\includegraphics[width=3.3in]{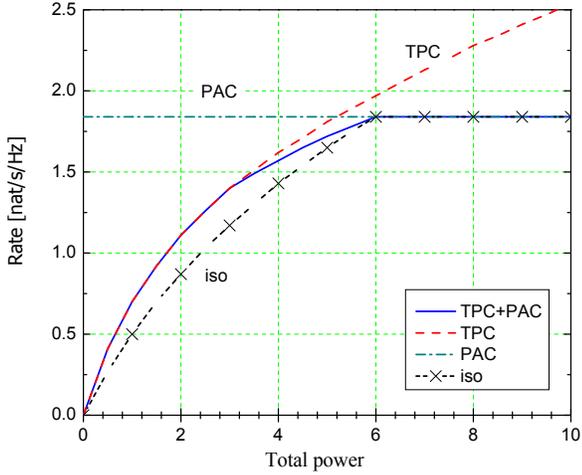}}
\caption{The capacity of the fixed MIMO channel in \eqref{eq.ex} under the TPC, the IPC and the joint constraints; isotropic signaling (iso) is also shown. $P=3$.}
\label{fig.1}
\end{figure}



\section{Appendix: Proof of Theorem \ref{thm.C}}

Since the problem in \eqref{eq.C.def} is convex and Slater's condition holds (as long as $P,\ P_T >0$), its KKT conditions are both sufficient and necessary for optimality \cite{Boyd-04}. The KKT conditions for this problem are as follows:
\bal
\label{eq.KKT.1}
- &(\bI+\bW\bR)^{-1}\bW -\bM +\gl\bI +\bLam = 0\\
\label{eq.KKT.2}
& \bM\bR=0,\ \gl(\tr\bR-P_T)=0,\ \gl_i(r_{ii}-P)=0\\
\label{eq.KKT.3}
& \tr\bR \le P_T,\ r_{ii}\le P, \ \bR \ge 0\\
\label{eq.KKT.4}
& \bM \ge 0,\ \gl \ge 0,\ \gl_i \ge 0
\eal
where $\gl,\ \gl_i$ are Lagrange multipliers (dual variables) responsible for the TPC and PAC, $\bM$ is the (matrix) Lagrange multiplier responsible for $\bR \ge 0$, $\bLam=\diag\{\gl_i\}$; \eqref{eq.KKT.1} is the stationarity condition, \eqref{eq.KKT.2} are complementary slackness conditions, \eqref{eq.KKT.3} and \eqref{eq.KKT.4} are primal and dual feasibility conditions. The key difficulty in solving analytically these conditions is that they are a system of non-linear matrix equalities and inequalities, and the PACs make the feasible set $S_R$ non-isotropic so that standard tools (e.g. Hadamard inequality) cannot be used. However, when $\bR$ is of full rank, the stationarity condition simplifies to
\bal
\label{eq.KKT.1a}
(\bR+\bW^{-1})^{-1} = \gl\bI +\bLam
\eal
since $\bM=0$ (from $\bM\bR=0$), so that
\bal
\label{eq.KKT.1b}
\bR = (\gl\bI +\bLam)^{-1} - \bW^{-1}
\eal
where $\bLam$ is determined from the PACs
\bal
\label{eq.KKT.1c}
r_{ii} = (\gl +\gl_i )^{-1} - (\bW^{-1})_{ii} \le P
\eal
and complementary slackness $\gl_i(r_{ii}-P)=0$ so that $\gl_i >0$ (active PAC) implies $r_{ii}= P$ and hence
\bal
\gl_i = (P+ (\bW^{-1})_{ii})^{-1} - \gl >0
\eal
Combining this with the case of inactive PAC $\gl_i=0$, one obtains
\bal
\label{eq.KKT.1c2}
r_{ii} = \gl^{-1} - (\bW^{-1})_{ii} \le P
\eal
and hence
\bal
\label{eq.KKT.1c3}
\gl_i = ((P+ (\bW^{-1})_{ii})^{-1} - \gl)_+ \ge 0
\eal
where $(x)_+=\max(x,0)$. It follows from \eqref{eq.KKT.1a} that off-diagonal parts of $\bR$ and $\bW^{-1}$ are the opposite of each other:
\bal
\bar{\bD}(\bR) = - \bar{\bD}(\bW^{-1})
\eal
and, from \eqref{eq.KKT.1c}-\eqref{eq.KKT.1c3}, that
\bal
\label{eq.KKT.1c3}
r_{ii} = \min(P, \gl^{-1} - (\bW^{-1})_{ii}) >0
\eal
from which \eqref{eq.R*.1} follows. \eqref{eq.R*.2} is a straightforward manipulation of \eqref{eq.R*.1}. \eqref{eq.R*.3} follows from the TPC, from which $\gl$ is found.

It remains to show that $\bR^* >0$. To this end, observe the following:
\bal\notag
\bR^* &= \min(P\bI, \gl^{-1}\bI - \bD(\bW^{-1})) - \bar{\bD}(\bW^{-1})\\ \notag
    &> \min(\gl_{w}^{-1}\bI, \gl_{w}^{-1}\bI - \bD(\bW^{-1})) - \bar{\bD}(\bW^{-1})\\
    &= \gl_{w}^{-1}\bI - \bW^{-1} \ge 0
\eal
where $\gl_{w}= \gl_m(\bW)$. The last inequality follows from $\gl_w\bI \le \bW$, while 1st inequality follows from $P> \gl_w^{-1}$ and $\gl^{-1} > \gl_w^{-1}$. To show the latter inequality, use \eqref{eq.R*.3} and note that
\bal\notag
f(\gl) = \sum_i \min(P, \gl^{-1} - (\bW^{-1})_{ii}) \le m\gl^{-1} -\tr\bW^{-1}
\eal
where $f(\gl)$ is a decreasing function (strictly so if at least one PAC is inactive), so that
\bal
f(\gl_w) \le m\gl_w^{-1} -\tr\bW^{-1} < P_T
\eal
from which it follows that $\gl< \gl_w$, as desired. This also implies that $r_{ii} > 0$ in \eqref{eq.KKT.1c3}, since
\bal
\gl^{-1} - (\bW^{-1})_{ii} > \gl_w^{-1} - (\bW^{-1})_{ii} \ge \gl_w^{-1} - \gl_w^{-1} = 0
\eal
where 2nd inequality is due to $(\bW^{-1})_{ii} \le \gl_1(\bW^{-1}) = \gl_w^{-1}$.

\end{document}